\documentclass[12pt]{article}

\usepackage[margin=1in]{geometry}

\usepackage{amsmath,amssymb,mathtools}
\usepackage{booktabs}
\usepackage{microtype}
\usepackage{authblk}
\usepackage{mathrsfs}
\usepackage{appendix}

\usepackage{graphicx}
\usepackage{float}
\usepackage{caption}
\usepackage[section]{placeins}
\usepackage{lscape}
\usepackage{booktabs}
\usepackage{threeparttable}

\usepackage{xurl}

\usepackage[
  backend=biber,
  bibstyle=authoryear,
  citestyle=authoryear-comp,
  maxcitenames=2,
  mincitenames=1,
  maxbibnames=99,
  minbibnames=99,
  sortcites=true,
  date=year,
  dashed=false,
  giveninits=true,
  uniquename=false,
  uniquelist=false,
  doi=true,
  url=true,
  isbn=false
]{biblatex}
\usepackage[hidelinks]{hyperref}

\DeclareDelimFormat[parencite]{nameyeardelim}{\addcomma\space}

\DeclareNameAlias{author}{family-given}
\DeclareNameAlias{editor}{family-given}
\DeclareNameAlias{translator}{family-given}

\DeclareDelimFormat{multinamedelim}{\addcomma\space}
\DeclareDelimFormat{finalnamedelim}{\addspace\bibstring{and}\space}

\DeclareFieldFormat[article]{title}{#1}
\DeclareFieldFormat[incollection]{title}{#1}

\renewbibmacro*{in:}{}

\DeclareFieldFormat[article]{volume}{#1}
\DeclareFieldFormat[article]{number}{\mkbibparens{#1}}
\renewbibmacro*{volume+number+eid}{%
  \printfield{volume}%
  \printfield{number}%
  \setunit{\bibeidpunct}%
  \printfield{eid}}

\renewbibmacro*{journal+issuetitle}{%
  \usebibmacro{journal}%
  \setunit*{\addcomma\space}%
  \iffieldundef{series}
    {}
    {\newunit
     \printfield{series}%
     \setunit{\addspace}}%
  \usebibmacro{volume+number+eid}%
  \setunit{\addspace}%
  \usebibmacro{issue+date}%
  \setunit{\addcolon\space}%
  \usebibmacro{issue}%
  \newunit}

\DeclareFieldFormat{pages}{#1}

\AtEveryBibitem{%
  \ifentrytype{book}
    {\clearfield{volume}}
    {}}

\DeclareFieldFormat[online]{url}{\url{#1}}
\renewbibmacro*{url+urldate}{\usebibmacro{url}}

\renewbibmacro*{url+urldate}{\usebibmacro{url}}

\usepackage{amsthm}

\theoremstyle{plain}
\newtheorem{Proposition}{Proposition}

\theoremstyle{remark}
\newtheorem{Assumption}{Assumption}

\newtheoremstyle{boldremark}{}{}{\normalfont}{}{\bfseries \itshape}{.}{ }{}
\theoremstyle{boldremark}
\newtheorem{Estimand}{Estimand}
\def\E{\mathrm{E}}

\def\ATME{\mathit{ATME}}
\def\AMTME{\mathit{AMTME}}

\def\pages{\mathit{pages}}
\def\candidates{\mathit{candidates}}
\def\win{\mathit{win}}

\def\indep{\perp\!\!\!\perp}
\newcommand{\percent}{{\fontfamily{ppl}\selectfont\%}}

\allowdisplaybreaks[3]

\title{Candidate Set Size and Voting Behavior: A Front-Door Approach to Causal Moderation}
\author{Masayuki Haruhara}
\affil{Hitotsubashi University, 2-1 Naka, Kunitachi, Tokyo 186-8601, Japan \\ \href{mailto: em265012@g.hit-u.ac.jp}{em265012@g.hit-u.ac.jp}}
\date{\empty}

\begin{document}
    \maketitle
    \begin{abstract}
    We study causal moderation when treatment assignment is randomized but the moderator is not. We combine the parallel estimation framework with front-door adjustment to identify an average mediated treatment moderation effect. We apply this approach to 165 municipal assembly elections in Tokyo (1987–2023), where pamphlet positions are assigned by lottery and total pamphlet pages are mechanically determined by candidate set size and fixed municipal rules. One additional candidate increases the front-page effect on winning probability by 0.322 percentage points through the page-count channel, 8.0\percent{} of its baseline magnitude, while the back-page effect does not increase significantly.
\end{abstract}

\noindent
\textbf{Keywords:} causal moderation; causal interaction; front-door criterion; voting behavior; choice overload; information acquisition.\\
\textbf{JEL Classification:} C18; C21; D72; D83.

\clearpage
\section{Introduction}
    The number of candidates in elections has been increasing worldwide. For example, the average number of candidates in U.S. House primary elections in 2020 was approximately 1.4 times that in 2010 \parencite{goidel2025choice}. A larger candidate set can, in principle, make it more likely that voters find a candidate who is close to their preferences \parencite{cunow2021less}.\par
However, a larger number of candidates also increases the number of comparisons voters must make to decide whom to vote for, thereby increasing the burden on voters in the voting decision-making process. In such cases, decision makers may adopt simpler decision-making processes that do not examine all available information, or may abandon the decision itself \parencite{iyengar2000choice, payne1993adaptive}. Thus, an increase in the number of candidates may also create a gap between voters' true preferences and their actual voting behavior.\par
When insufficient information causes actual choices not to reflect decision makers' true preferences, their welfare may be reduced. It is therefore important to analyze how an increase in the number of candidates affects voters' voting decision making. Such a phenomenon has already been observed, albeit in limited settings. For example, \textcite{cunow2021less} and \textcite{soderlund2021coping} report that, as the number of candidates increases, the tendency for candidates listed near the top of the ballot to receive more votes (ballot order effect: BOE) becomes stronger. However, existing studies analyze only decisions made over short periods in laboratories or at polling places. By contrast, many voters in actual elections are expected to make their voting decisions over the course of campaign periods lasting several days or longer. It is therefore not entirely appropriate to assess the desirability of increasing the number of candidates solely on the basis of changes in voting decision making under severe time constraints. Accordingly, this paper aims to quantify whether an increase in the number of candidates changes voting decision making in settings with relatively weak time constraints.\par
For this purpose, this paper uses the Candidate Information Pamphlet (CIP), which is used in Japanese election campaigns, to capture changes in voting decision making under relatively weak time constraints. The CIP is a document containing candidates' names, personal histories, policy views, etc. (Public Offices Election Act, Article 167) and is a multipage document distributed to households by election management committees, including through newspaper inserts, during the campaign period. \textcite{haruhara2026unintended} exploit the fact that the pages on which candidates appear in the CIP are determined by lottery and show that candidates appearing on the front page (the first page) or back page (the last page) have higher winning probabilities and vote shares (front / back page effect). They further note that the CIP is available for viewing at home for several days or longer and interpret differences in winning probabilities and vote shares across CIP pages as evidence that voters adopt simpler voting decision-making processes even when time constraints are relatively weak. Following their interpretation, this paper treats changes in the front / back page effect as changes in voters' voting decision-making processes. On this basis, using data from municipal assembly elections held in Tokyo, Japan, from 1987 to 2023, this paper examines whether an increase in the number of candidates changes the front / back page effect.\par
In the empirical analysis, we first note that the front / back page effect is a treatment effect---the effect of being listed on the front / back page, rather than on inner pages, on winning probabilities and vote shares. We therefore model the change in the front / back page effect induced by an increase in the number of candidates as the interaction effect between the number of candidates and front / back page placement on the winning probability. In addition, exploiting the fact that candidates' page assignments in the CIP are determined by lottery, we apply the parallel estimation framework (PEF) of \textcite{bansak2021estimating}. In their framework, when the treatment is randomized, the effect of a moderator on the outcome is identified within each stratum defined by the treatment, and the difference across strata identifies how the moderator changes the treatment effect, that is, the interaction effect between the two variables. Because the treatment in this paper has three categories---front, back, and inner page(s)---we construct a strategy for identifying the effect of the number of candidates on the winning probability within each stratum.\par
Page placement in the CIP is randomized, whereas the number of candidates, the second variable, is not. Thus, even if the number of candidates is correlated with the winning probability\footnote{The winning probability in this context is aggregated so that it can be compared with the number of candidates, which is an election-level variable. In general, the average winning probability across all candidates in an election is the number of seats divided by the number of candidates, whereas the average winning probability within strata defined by front-page, back-page, and inner-page placement is not mechanically determined.}, this correlation cannot immediately be interpreted as causal. For example, heightened electoral salience may both increase the number of candidates and induce voters to examine candidate information more carefully, thereby preventing an increase in the winning probabilities of candidates listed on the front / back page. To address such confounding, this paper uses the front-door criterion (FDC) of \textcite{pearl1995causal}. The FDC identifies the causal effect of an explanatory variable on an outcome by using an exogenous mediator between the explanatory variable and the outcome. In this paper, we posit that an increase in the number of candidates affects each candidate's winning probability through changes in the total number of CIP pages\footnote{An increase in the number of candidates likely affects each candidate's winning probability through paths other than the one mediated by the total number of CIP pages, but the method used in this paper cannot identify the effects along such paths. The purpose of the analysis, however, is to capture the relationship between the number of candidates and changes in winning probabilities attributable specifically to the front / back page effect. For this purpose, focusing only on the path mediated by the total number of CIP pages is more appropriate.}. The total number of CIP pages is determined solely by the ordinance of each municipality, which specifies the number of candidate slots per page, and by the number of candidates. Moreover, the ordinances specifying the number of candidate slots per page do not change during each municipality's sample period. Variation in the total number of CIP pages therefore arises only from the number of candidates and differences in ordinances across municipalities. Accordingly, by including municipality fixed effects in the model, the total number of CIP pages serves as an exogenous mediator.\par
The results show that one additional candidate increases the front page effect by 0.322 percentage points through an increase in the total number of CIP pages. This increase is substantial, amounting to 8.0\percent{} of the baseline front page effect of 4.023 percentage points. Thus, an increase in the number of candidates changes voting decision making even when time constraints are relatively weak.\par
This paper extends the conclusions of the literature on the relationship between the number of candidates and voters' voting decision making to a more realistic setting and to the decision making of a broader set of voters. Many studies have shown that, in settings with severe time constraints, such as experimental environments and polling places, an increase in the number of candidates leads voters to behave consistent with adopting simpler voting decision-making processes \parencite[e.g.,][]{cunow2021less, goidel2025choice, meredith2013causes, soderlund2021coping}. The results of this paper show that this simplification also occurs during the information-acquisition process in the campaign period, when time constraints are relatively weak.\par
In addition, this paper contributes to applied econometrics by providing a way to identify FDC settings that are more suitable than those in existing studies \parencite[e.g.,][]{bellemare2024paper, glynn2017front, glynn2018front, pearl1995causal}. The setting in this paper exploits an institutional rule (algorithm) that remains unchanged and has conditional branches depending only on the main explanatory variable, thereby generating an exogenous mediator. The FDC is considered difficult to apply because suitable settings are hard to find \parencite{huntington2021effect}; the insight from the setting identified in this paper mitigates this difficulty.\par
Finally, this paper extends applications of \textcite{bansak2021estimating} by combining the PEF with an explicit causal identification strategy for the non-randomized moderator. Several experimental studies apply their PEF when the treatment is randomized \parencite[e.g.,][]{carlsson2024politicians, kozyreva2023resolving, krajewski2025people, pickett2024officer}. In these studies, however, identification of the causal effect of the moderator on the outcome goes no further than adjustment for observed covariates and therefore cannot be said to fully resolve endogeneity due to unobserved factors. \textcite{bernardi2025socioeconomic} and \textcite{emeriau2026welcome} also apply the PEF in observational studies, but the treatment is not randomized in \textcite{bernardi2025socioeconomic}, while \textcite{emeriau2026welcome} adjust only for observed covariates to address the endogeneity of the moderator. By contrast, in this paper, the treatment is randomized and the conditions under which the FDC identifies causal effects are satisfied. This paper therefore provides a model case for applying \textcite{bansak2021estimating}.

\section{Related Literature}

    \subsection{The Decision-making Process under Excessive Choices}
        Voters' decision making in elections consists of choosing one option from a set comprising all candidates and abstention. The process includes deciding what information to acquire and how much information to acquire about the election as a whole and about all candidates, and then determining, on the basis of the information obtained and prior information, which option in the choice set best fits their preferences. A large body of research has pointed out that such decision-making processes and their outcomes may differ depending on whether the number of available options is small or large.\par
Under a benchmark in which decision makers can costlessly evaluate all available options, expanding the choice set cannot reduce the maximum attainable utility. However, with limited cognitive resources, larger choice sets may induce decision makers to simplify their decision-making processes \parencite{payne1993adaptive}. If such simplification occurs before option attributes are recognized, decision makers may disregard important attributes that distinguish among options or exclude from consideration options whose attributes differ substantially from those of other options. In such cases, an increase in the number of options may reduce the range of option attributes considered by the decision maker, so a monotonically increasing relationship between the number of options and the decision maker's utility need not hold.\par
Building on these foundational theoretical arguments, experimental studies in psychology have found that, as the number of options increases, individuals become more likely to forgo a final choice and, even when they make a choice, tend to evaluate the chosen option less favorably. \textcite{iyengar2000choice} compare choice sets consisting of 6 versus 24 varieties of jam and 6 versus 30 varieties of chocolate, and show that larger choice sets reduce the probability of purchase and lower post-choice satisfaction. Similar patterns have also been documented for pens \parencite{shah2007buying}, gift boxes \parencite{reutskaja2009satisfaction}, and 401(k) plans \parencite{sethi2004much}. This phenomenon is known as choice overload and has been attributed, among other mechanisms, to higher cognitive costs of identifying and evaluating differences among alternatives \parencite{keller1987effects} and greater regret over hasty choices \parencite{inbar2011decision}.\par
Related to these findings in psychology, economics has developed several models suited to decision making when many options are available and has tested them using experimental and observational data. One prominent framework assumes that decision makers consider only a subset of the presented choice set and make their final choice from this consideration set \parencite{cattaneo2020random, manzini2014stochastic}. In addition to approaches that parametrically specify rules of attention, methods have been proposed to identify the formation of consideration sets nonparametrically under weak assumptions \parencite{cattaneo2020random}. Studies focusing on the search process also use satisficing rules under which search stops once an option satisfying an aspiration level / reservation utility is reached \parencite{caplin2011search, simon1955behavioral}, and some studies attempt to identify search and stopping rules using eye-tracking data \parencite{reutskaja2011search}. Rational inattention models, which endogenize the allocation of attention under information-processing costs \parencite{mackowiak2023rational, matvejka2015rational, sims2003implications}, have also been used to address closely related questions.

    \subsection{The Voter Behavior under Excessive Number of Candidates}
        As with increases in the number of options in other choice problems, an increase in the number of candidates may raise the cost of acquiring and comparing all information and encourage the use of simpler decision-making processes. \textcite{aguilar2015ballot} conduct a survey experiment on the streets of Brazil and report that subjects are more likely to vote for a candidate of the same self-reported race when presented with 12 candidates rather than 3. Race is among the candidate attributes that voters can readily observe. Greater reliance on such information is therefore consistent with a simplification of the decision-making process through the omission of information that is relatively difficult to acquire.\par
Furthermore, \textcite{cunow2021less} and \textcite{goidel2025choice} experimentally show that, as the number of candidates increases, the tendency for candidates listed near the beginning of the ballot to receive more votes (the BOE) becomes stronger. In actual elections, \textcite{meredith2013causes} and \textcite{soderlund2021coping} similarly document that the BOE becomes stronger as the number of candidates increases. In related work, using Japanese election data, \textcite{haruhara2026unintended} show that the tendency for candidates listed on the front page of an official medium providing candidate information (the CIP) to receive more votes (the front page effect) becomes stronger as the number of candidates increases. The BOE and the front page effect are consistent with decision-making processes in which voters stop searching before reaching the end of an ordering of candidates. These findings therefore suggest that an increase in the number of candidates promotes the use of alternative decision-making processes.\par
However, these findings may not be sufficient to conclude that the decision-making processes of most voters in actual elections change with the number of candidates. First, because the BOE concerns the effect of candidates' order on the ballot on vote shares, its analysis does not reflect the behavior of voters who have completed their voting decisions before arriving at the polling place. Consequently, evidence that the BOE becomes stronger as the number of candidates increases does not necessarily imply that the voting decisions of many voters, which are made over sufficient time during the campaign period, change as the number of candidates increases.\par
In addition, \textcite{lau2001advantages} point out that actual elections contain many features that are difficult to reproduce experimentally, so voting decision-making processes and their outcomes in experimental settings may differ from those in actual elections. For example, when an incumbent seeks reelection, voters can use abundant information about the incumbent's performance in office without acquiring additional information. Moreover, because election campaigns generally last several days or longer, voters can collect information and decide how to vote with sufficient time and while discussing the election with people around them. These conditions are difficult to reproduce in experimental settings. The external validity of findings from experimental studies such as \textcite{aguilar2015ballot} and \textcite{cunow2021less} may therefore be limited.\par
Another literature examines the relationship between the number of electoral options and abstention, but its findings are mixed. Some studies report that an increase in the number of candidates increases intentional invalid voting \parencite{nagler2015trading}, whereas others report that it increases turnout \parencite{bol2022does, goidel2025choice, pons2018expressive}. Other studies find that the relationships between the number of candidates or parties and invalid-vote rates or turnout vary with the electoral system \parencite{boulding2015political, cohen2018dynamic}, or that these relationships are inverted U-shaped \parencite{taagepera2014turnout}. Thus, there is no unified conclusion as to whether an increase in the number of candidates changes the voting decision-making process in a way that increases the abandonment of choice.

\section{Data}
    This paper uses data from municipal assembly elections held in municipalities\footnote{Municipalities in Tokyo consist of 23 special wards, 26 cities, and others.} in Tokyo, Japan, from 1987 to 2023. The dataset contains data on 8,450 candidates from 165 elections\footnote{Relatively few records are preserved for elections held before 2000, which account for only 20\percent{} (33 elections) of this dataset. Similarly, records for city assembly elections are less frequently preserved than those for ward assembly elections and account for only 19.4\percent{} (32 elections) of this dataset. The main focus of the analysis is therefore ward assembly elections held in Tokyo's 23 special wards from 2000 to 2023 (101 elections, or 61\percent{} of the dataset). This sample is almost identical to the ward assembly elections in Tokyo's 23 special wards from 2005 to 2023 analyzed by \textcite{haruhara2026unintended}.}, which we collected from the 487 elections held in relavant area and period. These data were collected manually from Records of Elections archived at the National Diet Library and municipal archives, Election.com, which provides election information from across Japan, and other sources.\footnote{Candidate gender ceased to be an officially disclosed item for elections held from July 2020 onward and therefore became unavailable from official sources. We consequently collected gender information from Election.com; the official websites of candidates and their affiliated parties; X (formerly Twitter); Facebook; and other publicly available sources.}

    \subsection{Candidate Information Pamphlet}
        The CIP analyzed in this paper is a document containing candidates' names, personal histories, policy views, etc. (Public Offices Election Act, Article 167). As shown in Figure 1, candidate information in a CIP is placed within a grid of boxes, while the margins mainly contain voting information. The number of rows and columns in this grid (the pamphlet format), such as four rows by three columns in the document on the left of Figure 1 and five rows by three columns in the document on the right, is constant within an election. A CIP therefore consists of multiple pages with the same number of rows and columns, with information on the candidate assigned to each box. Figure 2 lists the pamphlet formats used in the elections in the dataset, and Figure 3 shows the number of observations for each pamphlet format. Figure 3 shows that the 3 $\times$ 2, 4 $\times$ 3, and 5 $\times$ 3 pamphlet formats are used most frequently.
\begin{figure}[htbp]
    \centering
    \begin{minipage}{\textwidth}
        \centering
        \caption{Example of a Candidate Information Pamphlet}
        \includegraphics[width=\linewidth]{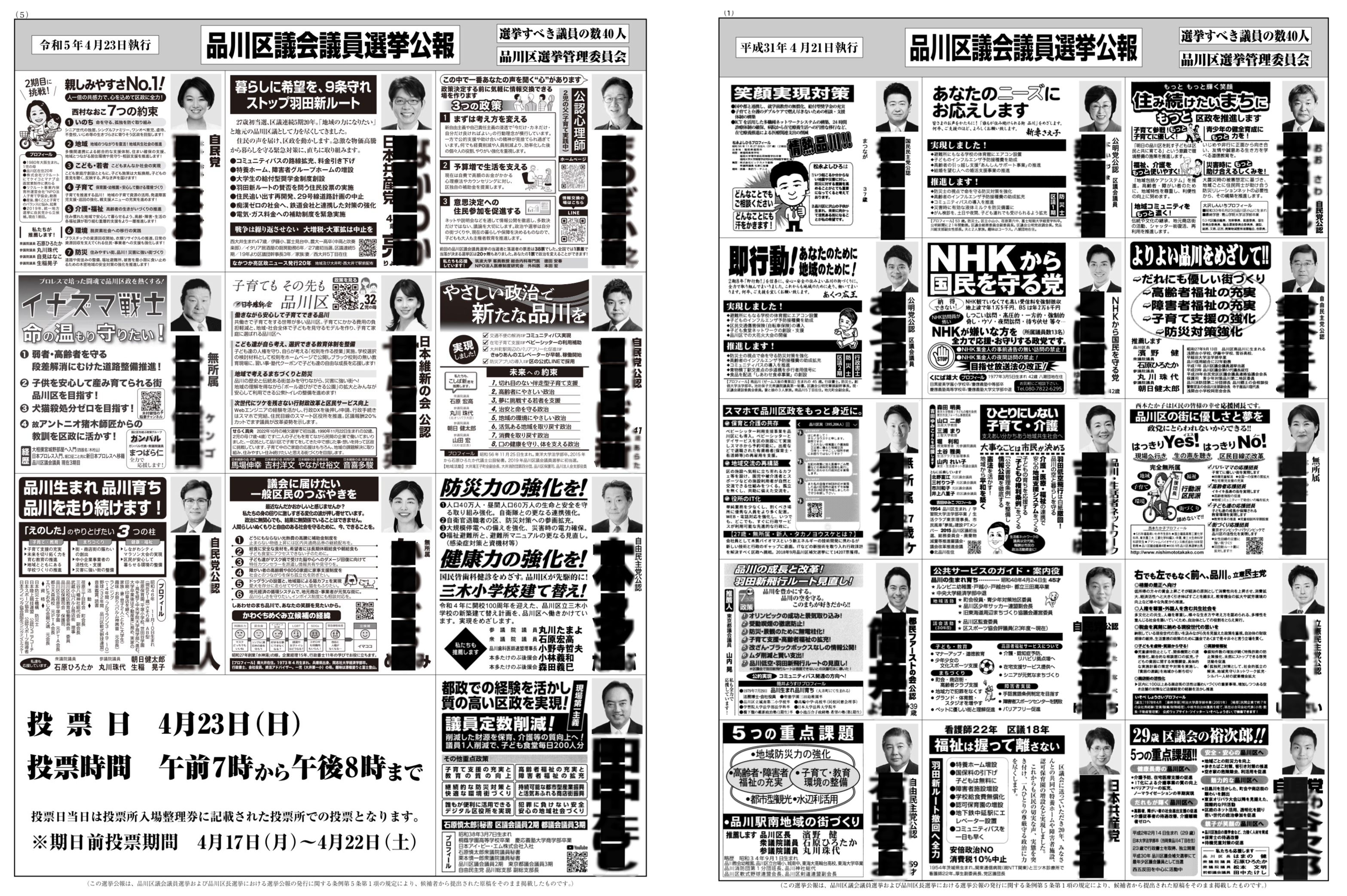}
        \begin{flushleft}
            \footnotesize
            \textbf{Note}: The sections displaying candidates' names have been modified in the image.\\
            \textbf{Source}: Provided by Shinagawa City Election Management Committee; reproduced with permission.
        \end{flushleft}
    \end{minipage}
\end{figure}
\par
\begin{figure}[htbp]
    \centering
    \begin{minipage}{\textwidth}
        \centering
        \caption{Pamphlet formats list}
        \includegraphics[width=\linewidth]{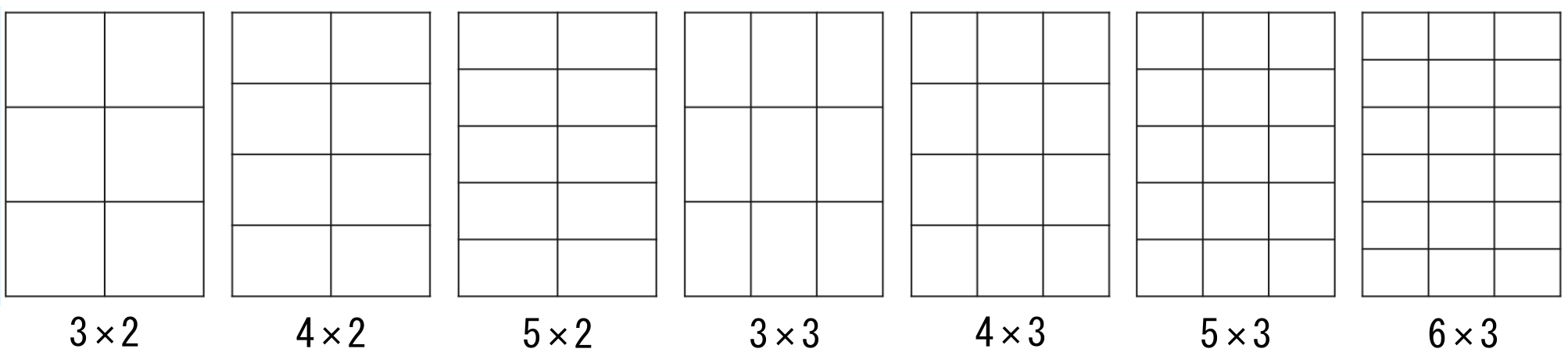}
        \begin{flushleft}
            \footnotesize
            \textbf{Note}: Pamphlet formats are denoted as ``rows × columns'' (e.g., 3 × 2 indicates three rows and two columns per page).
        \end{flushleft}
    \end{minipage}
\end{figure}
\par
\begin{figure}[htbp]
    \centering
    \begin{minipage}{0.75\textwidth}
        \centering
        \caption{Number of observations for each pamphlet format}
        \includegraphics[width=\linewidth]{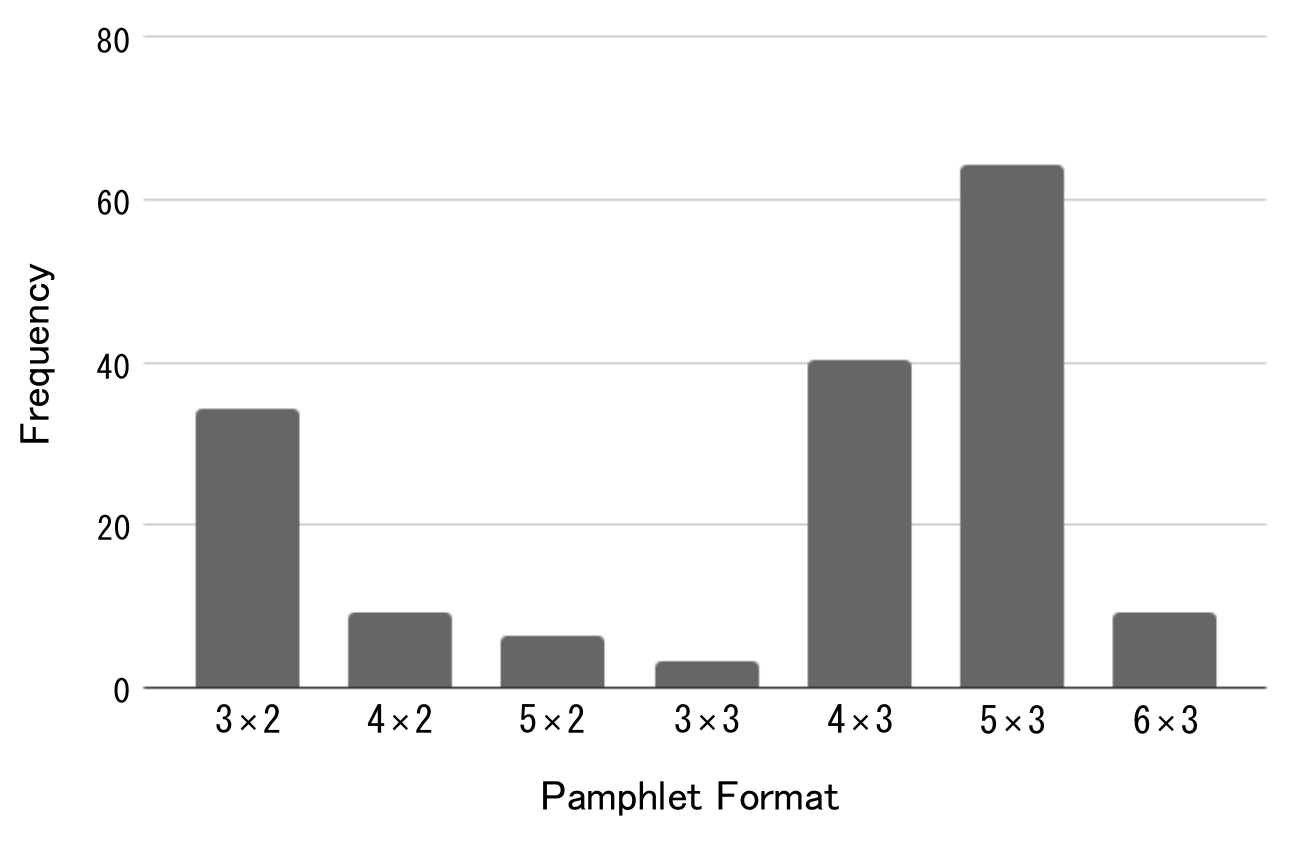}
        \begin{flushleft}
            \footnotesize
            \textbf{Note}: Pamphlet formats are denoted as ``rows×columns'' (e.g., 3 × 2 indicates three rows and two columns per page).
        \end{flushleft}
    \end{minipage}
\end{figure}
\par
The pamphlet format for each election is specified by the ordinance of the municipality, conditional on the number of candidates. For example, Article 82 of the Ota City Election Administration Regulations stipulates that when there are two or fewer candidates submitting materials for publication, the pamphlet shall have one page with one row and two columns; when there are three, one page with one row and three columns; when there are four, two pages with one row and two columns; when there are five to eight, two pages with two rows and two columns; when there are nine to twelve, one page with four rows and three columns; when there are thirteen to twenty-four, two pages with four rows and three columns; and thereafter two pages shall be added for every additional twenty-four candidates. Ordinances specifying the pamphlet format may be revised by decisions of the municipal assembly.\footnote{For example, Setagaya City revised its ordinance specifying the pamphlet format in 2011.} We therefore identified, for each municipality, a period containing no revision to the ordinance specifying the pamphlet format and collected only elections held during that period for the dataset.\footnote{When an ordinance was revised, we include the period before or after the revision that yields the larger number of elections for which data can be collected (the election-level sample size).} Consequently, the pamphlet formats used in the elections analyzed in this paper are unaffected by ordinance revisions and vary only with differences in municipality-specific ordinances and the number of candidates in each election.\par
The pamphlet format uniquely determines the number of candidate slots per page. In addition, a CIP has no front or back cover; candidate information and voting guidance are printed on every page. The total number of CIP pages is therefore determined by the number of candidate slots per page and the number of candidates. Accordingly, variation in the total number of CIP pages, which serves as the mediator in this paper, arises only from differences in municipality-specific ordinances and the number of candidates in each election. Figure 4 shows the distribution of the number of candidate slots per page, a scatter plot of the number of candidates against the number of slots per page, and a scatter plot of the number of slots per page against the total number of pages. Figure 4 suggests that the number of candidates is positively correlated with the number of slots per page, while the number of slots per page is negatively correlated with the total number of pages. Thus, although an increase in the number of candidates increases the total number of pages holding the number of slots per page fixed, the unconditional correlation between the number of candidates and the total number of pages may be weak or negative. Indeed, as shown in Figure 5, the relationship between the number of candidates and the total number of pages is not sufficiently clear to characterize as positively correlated. One possible explanation is that municipalities set the number of slots per page so that the total number of pages does not increase substantially as the number of candidates rises, thereby limiting the cost of issuing the CIP.
\begin{figure}[htbp]
    \centering
    \begin{minipage}{\textwidth}
        \centering
        \caption{Distribution of the number of candidate slots per page and scatter plots with the number of candidates and the total number of pages}
        \includegraphics[width=\textwidth]{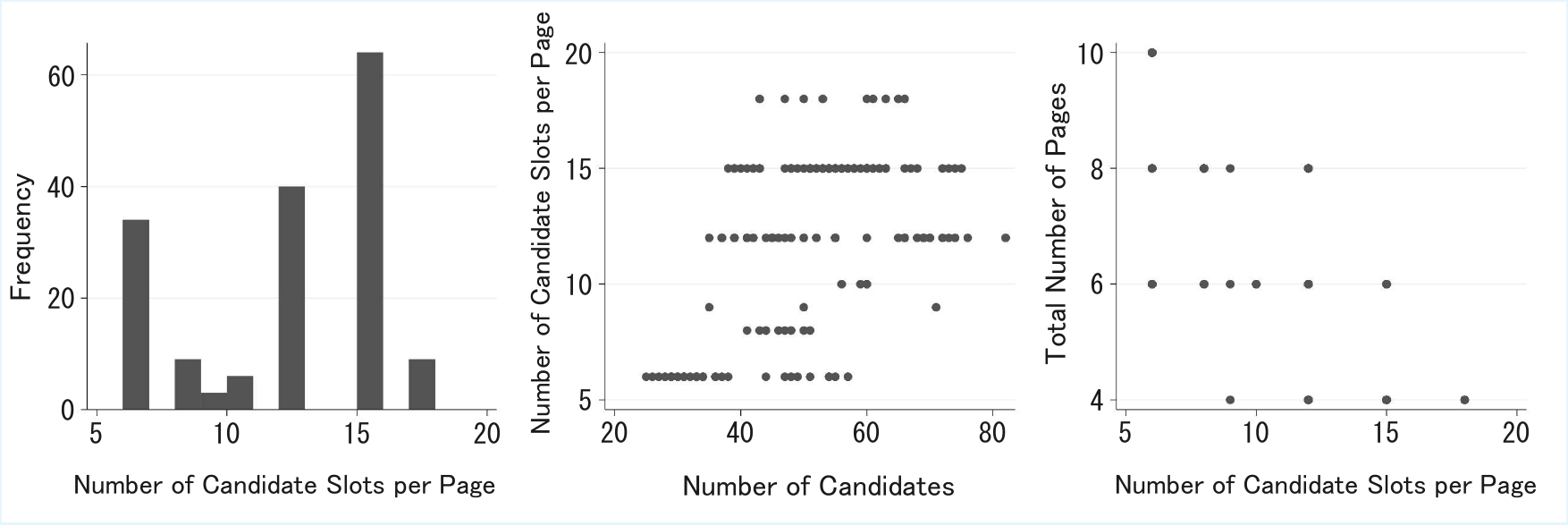}
    \end{minipage}
\end{figure}
\par
\begin{figure}[htbp]
    \centering
    \begin{minipage}{\textwidth}
        \centering
        \caption{Distribution of the number of candidates and the total number of pages, and their scatter plot}
        \includegraphics[width=\textwidth]{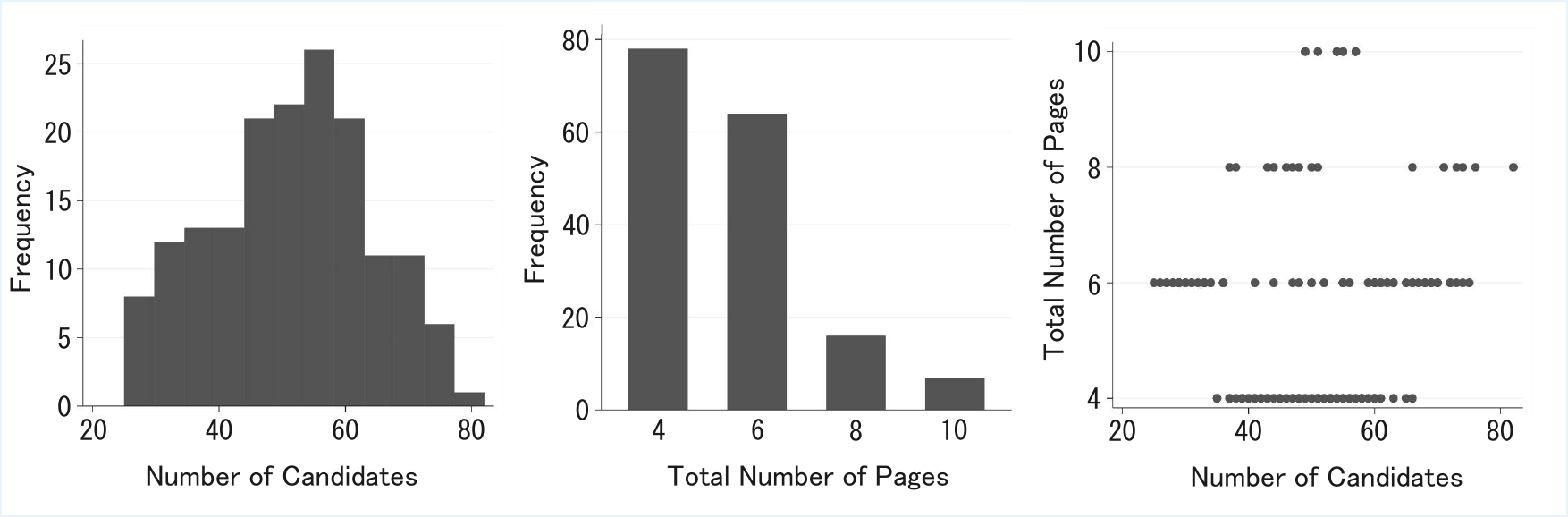}
    \end{minipage}
\end{figure}
\par
Moreover, in every election in the dataset, municipal ordinances require candidates' positions in the CIP to be determined by lottery (see \textcite{haruhara2026unintended} for details). Thus, whether each candidate is listed on the front page (the first page), back page (the last page), or inner pages (all other pages) is randomly determined.\par
The CIP is issued by each municipality's election management committee and distributed to voters' households by the day before the election, generally including through newspaper inserts. Voters can therefore spend time reading candidate information and deciding how to vote during a campaign period that generally lasts several days or longer. Surveys have repeatedly confirmed that many voters in Japanese municipal assembly elections actually follow such a voting decision-making process. In opinion surveys conducted by \textcite{fairelections2012, fairelections2016, fairelections2019, fairelections2024}, respondents were allowed to select multiple campaign activities, including posters and street speeches, that they considered ``useful.'' The CIP was the item selected by the largest number of respondents.

    \subsection{Summary Statistics}
        Tables 1 and 2 report summary statistics for the variables used in this paper. At the election level, the dataset contains the number of candidates as the main explanatory variable and the total number of CIP pages as the mediator. At the candidate level, it contains a winning dummy as the outcome, page placement in the CIP as the main explanatory variable, and age, gender, candidate status (incumbent versus newcomer / former), and party affiliation as covariates. In addition, the dataset was collected and constructed to include only elections with no missing values for any of these variables.
\begin{table}[H]
    \centering
    \begin{threeparttable}
        \caption{Summary statistics for election-level variables}
        \begin{tabular}{@{}lccccc@{}}
            \toprule
            & Observations & Mean & SD & Min & Max \\
            \midrule
            Number of candidates & 165 & 51.212 & 12.483 & 25 & 82\\
            Total number of pages & 165 & 5.418 & 1.623 & 4 & 10\\
            Number of candidate slots per page & 165 & 11.909 & 3.759 & 6 & 18\\
            \bottomrule
        \end{tabular}
    \end{threeparttable}
\end{table}
\par
The main object of analysis in this paper is the effect of the number of candidates on the front / back page effect. The front / back page effect refers to the causal effect of front / back page placement on the winning probability: the extent to which a candidate's probability of winning is higher when listed on the front / back page of the CIP than when listed on an inner page. Thus, if the causal effect of front / back page placement on the winning probability varies with the number of candidates, this suggests that the number of candidates affects the front / back page effect. Figure 6 shows that such a relationship is in fact observed in the dataset. Specifically, Figure 6 reports the difference in mean winning probabilities between candidates listed on the front page and those listed on inner pages for subsamples stratified by the number of candidates. The difference is larger in elections with 60 or more candidates than in elections with fewer than 50 candidates or with 50--59 candidates. Because candidates' page placements in the CIP are randomly determined, the difference in mean winning probabilities between front-page and inner-page candidates is the causal effect of front-page placement on the winning probability. Figure 6 therefore shows that the front page effect differs with the number of candidates.
\begin{landscape}
    \begin{table}[H]
        \centering
        \begin{threeparttable}
            \caption{Summary statistics for candidate-level variables}
            \begin{tabular}{@{}llccccc@{}}
                \toprule
                &  & Observations & Mean & SD & Min & Max \\
                \midrule
                Winning dummy &  & 8,450 & 0.750 & 0.433 & 0 & 1\\
                Page placement & Front page & 8,450 & 0.220 & 0.414 & 0 & 1\\
                & Back page & 8,450 & 0.158 & 0.365 & 0 & 1\\
                & Inner pages & 8,450 & 0.622 & 0.485 & 0 & 1\\
                Age &  & 8,450 & 50.609 & 11.178 & 25 & 91\\
                Gender & Male & 8,450 & 0.776 & 0.417 & 0 & 1\\
                & Female & 8,450 & 0.224 & 0.417 & 0 & 1\\
                Candidate status & Incumbent & 8,450 & 0.619 & 0.486 & 0 & 1 \\
                & Newcomer / Former & 8,450 & 0.381 & 0.486 & 0 & 1\\
                Party affiliation & Liberal Democratic Party & 8,450 & 0.278 & 0.448 & 0 & 1\\
                & Komeito & 8,450 & 0.153 & 0.360 & 0 & 1\\
                & Japanese Communist Party & 8,450 & 0.110 & 0.313 & 0 & 1\\
                & Minor parties & 8,450 & 0.236 & 0.425 & 0 & 1\\
                & Independents & 8,450 & 0.222 & 0.416 & 0 & 1\\
                \bottomrule
            \end{tabular}
        \end{threeparttable}
    \end{table}
\end{landscape}
\par
\begin{figure}[htbp]
    \centering
    \begin{minipage}{\textwidth}
        \centering
        \caption{Correlation between the number of candidates and the front page effect}
        \includegraphics[width=0.62\textwidth]{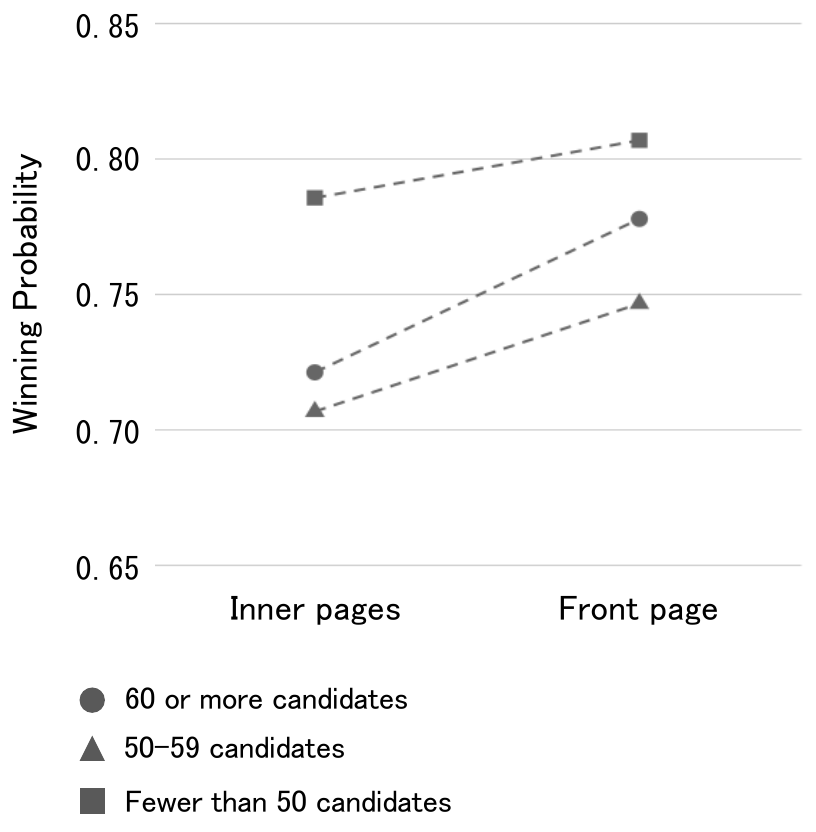}
    \end{minipage}
\end{figure}
\par

\section{Estimation Method}
    
    \subsection{Causal Moderation: Parallel Estimation Framework of Bansak (2021)}
        \textcite{bansak2021estimating} proposes a framework for identifying causal interaction effects between a randomized treatment with a small number of categories and a non-randomized moderator. Let $Y_i^P(t,s)$ denote the potential outcome corresponding to a binary treatment $T_i=t$ and moderator $S_i=s$, each taking values $0$ and $1$. The estimand of interest, the average treatment moderation effect ($\ATME$), can be written as follows:
\begin{equation*}
    \ATME \equiv \E\left[\left\{Y_i^P (1, 1) - Y_i^P (0, 1)\right\} - \left\{Y_i^P (1, 0) - Y_i^P (0, 0)\right\}\right].
\end{equation*}
That is, the $\ATME$ measures the extent to which the change in $Y$ induced by changing $T$ from $0$ to $1$ changes when $S$ changes from $0$ to $1$; equivalently, it measures the extent to which $S$ changes the effect of $T$ on $Y$.\par
Bansak shows that, under three standard assumptions (stable unit treatment value assumption: SUTVA, common support, and complete randomization of the treatment) and the two assumptions in \eqref{independent_T} and \eqref{conditional_independent_S}, $\delta^{PE}$ in \eqref{parallel_estimation_equation} equals the $\ATME$.
\begin{alignat}{1}
    &\left(Y_i^P (1, 1), Y_i^P (0, 1), Y_i^P (1, 0), Y_i^P (0, 0), S_i, \boldsymbol{X}_i^P\right) \indep T_i, \label{independent_T}\\
    &\text{where}\ \boldsymbol{X}_i^P\ \text{are}\ \text{covariates}. \notag
\end{alignat}
\begin{equation}
    \left(Y_i^P (1, 1), Y_i^P (0, 1), Y_i^P (1, 0), Y_i^P (0, 0)\right) \indep S_i \mid \boldsymbol{X}_i^P. \label{conditional_independent_S}
\end{equation}
\begin{alignat}{1}
    \gamma_t &= \E_{\boldsymbol{X}^P \mid T = t}\left[\E\left[Y_i^P (t, 1) \mid T_i = t, S_i = 1, \boldsymbol{X}_i^P\right] - \E\left[Y_i^P (t, 0) \mid T_i = t, S_i = 0, \boldsymbol{X}_i^P\right]\right], \notag\\
    \delta^{PE} &= \gamma_1 - \gamma_0. \label{parallel_estimation_equation}
\end{alignat}\par
In other words, the sample is split into the $T=1$ and $T=0$ strata; within each stratum, the difference in $Y$ between $S=1$ and $S=0$, conditional on $\boldsymbol{X}^P$, is estimated; and the difference between these two within-stratum differences identifies the causal interaction effect. For the number of candidates, which is the moderator in this paper, however, it is difficult to find control variables that fully satisfy \eqref{conditional_independent_S}. We therefore construct an identification strategy by applying the FDC.

    \subsection{Front-door Criterion of Pearl (1995)}
        \textcite{pearl1995causal} proposes a framework that, for a directed acyclic graph (DAG) such as Figure 7, identifies the causal effect of $H$ on $Y$ by decomposing it into the causal effect of $H$ on $M$ and the causal effect of $M$ on $Y$.
\begin{figure}[htbp]
    \centering
    \begin{minipage}{\textwidth}
        \centering
        \caption{DAG for the FDC}
        \includegraphics[width=0.65\textwidth]{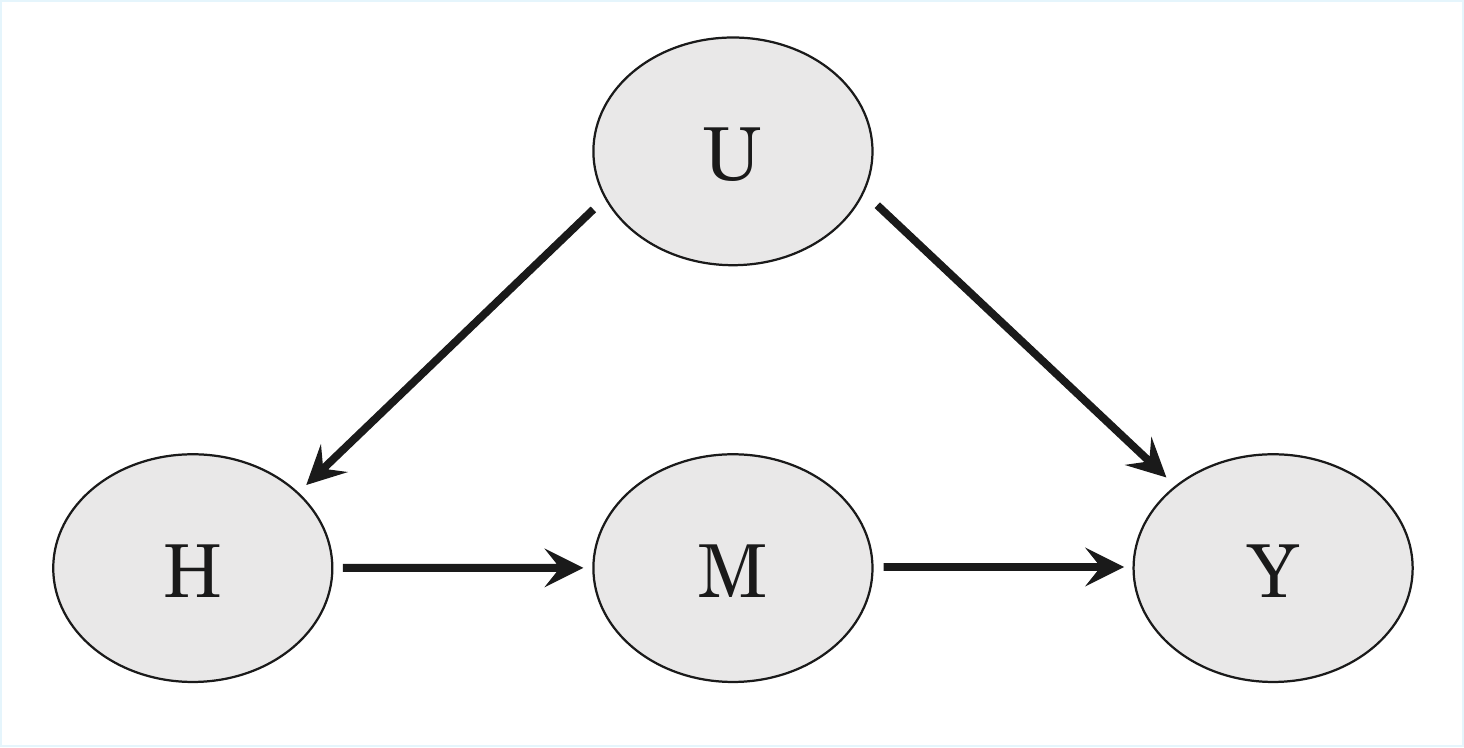}
    \end{minipage}
\end{figure}
\par
Building on this DAG-based argument, \textcite{bellemare2024paper} let $Y_i^F(h,m)$ denote the potential outcome corresponding to a binary variable $H_i=h$ and mediator $M_i=m$, each taking values $0$ and $1$, and let $M_i(h)$ denote the potential mediator corresponding to $H_i=h$. They organize the assumptions required for identification into the three conditions in \eqref{no_direct_paths}, \eqref{exogenous_H_to_M}, and \eqref{H_condition_M_independence}.
\begin{equation}
    Y_i^F (1, m) = Y_i^F (0, m),\ \forall m. \label{no_direct_paths}
\end{equation}
\begin{equation}
    (M_i (1), M_i (0)) \indep H_i. \label{exogenous_H_to_M}
\end{equation}
\begin{equation}
    \E\left[Y_i^F (h, m) \mid H_i = h, M_i = m\right] = \E\left[Y_i^F (h, m) \mid H_i = h\right],\ \forall m,h. \label{H_condition_M_independence}
\end{equation}\par
In this paper, when $H$ is the number of candidates and $M$ is the total number of CIP pages, confounders that would violate \eqref{exogenous_H_to_M} and \eqref{H_condition_M_independence} can be eliminated by controlling for municipality dummies $\boldsymbol{X}^F$ (details are provided below). In addition, \textcite{bellemare2024paper} show that, even when \eqref{no_direct_paths} does not hold, the path through which $H$ affects $Y$ via $M$ can still be identified.\footnote{In that case, however, the identified causal effect may not coincide with the total effect of $H$ on $Y$.} Under linearity, the estimation equations for identifying only the mediated path while controlling for $\boldsymbol{X}^F$ so that \eqref{exogenous_H_to_M} and \eqref{H_condition_M_independence} hold are as follows:
\begin{alignat}{1}
    \theta_0 &= \E_{\boldsymbol{X}^F} \left[\E\left[M_i (1) \mid H_i = 1, \boldsymbol{X}^F\right] - \E\left[M_i (0) \mid H_i = 0, \boldsymbol{X}^F\right]\right], \notag\\
    \theta_1 &= \E_{\boldsymbol{X}^F} \left[\E\left[Y_i^F (h, 1) \mid H_i = h, M_i = 1, \boldsymbol{X}^F\right] - \E\left[Y_i^F (h, 0) \mid H_i = h, M_i = 0, \boldsymbol{X}^F\right]\right], \notag\\
    \gamma^{FD} &= \theta_0 \times \theta_1. \label{front-door_estimation_equation}
\end{alignat}
That is, the causal mediation effect is estimated by taking the product of the difference in $M$ between $H=1$ and $H=0$, conditional on $\boldsymbol{X}^F$, and the difference in $Y$ between $M=1$ and $M=0$, conditional on $\boldsymbol{X}^F$ and $H$.

    \subsection{An Application of Front-door Criterion for Causal Moderation}
        Building on Sections 4.1 and 4.2, this subsection combines the PEF and the FDC to identify the causal moderation effect. Let $Y(t,h,m)$ denote the potential outcome corresponding to $T=t$, $H=h$, and $M=m$. Substituting \eqref{front-door_estimation_equation} from Section 4.2 into \eqref{parallel_estimation_equation} from Section 4.1 with $S=H$ and $\boldsymbol{X}^P=\boldsymbol{X}^F$, and treating $H$ and $M$ as continuous variables, yields the estimation equations in \eqref{joint_estimation_equation}.
\begin{alignat}{1}
    \theta_{0t} &= \E_{H, \boldsymbol{X}^F \mid T = t} \left[\E\left[\frac{\partial M_i (h)}{\partial h} \middle| T_i = t, H_i = h, \boldsymbol{X}^F\right]\right], \notag\\
    \theta_{1t} &= \E_{H, M, \boldsymbol{X}^F \mid T = t}\left[\E\left[\frac{\partial Y_i (t, h, m)}{\partial m} \middle| T_i = t, H_i = h, M_i = m, \boldsymbol{X}^F\right]\right], \notag\\
    \gamma_t^{FD} &= \theta_{0t} \times \theta_{1t}, \notag\\
    \delta^{PE \times FD} &= \gamma_1^{FD} - \gamma_0^{FD} \label{joint_estimation_equation}.
\end{alignat}
That is, the sample is split into the $T=1$ and $T=0$ strata. Within each stratum, we first estimate the marginal effect of $H$ on $M$ conditional on $\boldsymbol{X}^F$ and the marginal effect of $M$ on $Y$ conditional on $\boldsymbol{X}^F$ and $H$, and take their product. We then take the difference between the resulting estimator in the $T=1$ stratum and that in the $T=0$ stratum.
\begin{Proposition}
    Under the standard assumptions (SUTVA, common support, complete randomization of the treatment, and linearity) and Assumptions 1--3, the estimator in \eqref{joint_estimation_equation} identifies the average mediated treatment moderation effect ($\AMTME$) in \eqref{AMTME_estimand}.
\end{Proposition}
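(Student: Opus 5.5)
The plan is to show, stratum by stratum, that $\gamma_t^{FD}$ equals the mediated effect of $H$ on $Y$ within the stratum $T=t$. Differencing the two strata then gives the $\AMTME$. I take the $\AMTME$ in \eqref{AMTME_estimand} to be the difference across $T$ of the average effect of a marginal change in $h$ transmitted only through $M$, that is,
\[
\E\left[\frac{\partial Y_i(1,h,M_i(h))}{\partial m}\frac{\partial M_i(h)}{\partial h} - \frac{\partial Y_i(0,h,M_i(h))}{\partial m}\frac{\partial M_i(h)}{\partial h}\right].
\]
Under linearity, each of these partial derivatives is a constant within a stratum; conditional on $\boldsymbol{X}^F$, it can vary at most across municipalities. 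So the product of the averages coincides with the average of the products once the marginal coefficients are common across municipalities. If they are not, I would at least need the $M$-slope and the $H$-slope to be uncorrelated given $\boldsymbol{X}^F$. I will state linearity in the form of constant coefficients, so that this issue disappears.

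\textbf{Step 1 (randomization of $T$).} Use complete randomization together with an analogue of \eqref{independent_T}, with $(H_i,M_i(\cdot),\boldsymbol{X}^F)$ in place of $(S_i,\boldsymbol{X}^P)$. Conditioning on $T_i=t$ then leaves the joint distribution of $(Y_i(t,\cdot,\cdot),H_i,M_i(\cdot),\boldsymbol{X}^F)$ unchanged. Two consequences follow. First, the distributions $H,\boldsymbol{X}^F\mid T=t$ and $H,M,\boldsymbol{X}^F\mid T=t$ appearing in $\theta_{0t}$ and $\theta_{1t}$ equal the population distributions. Second, the observed $Y$ in stratum $t$ equals $Y_i(t,H_i,M_i)$ by SUTVA. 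This is the same step that drives \eqref{parallel_estimation_equation}.

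\textbf{Step 2 (first stage, $\theta_{0t}$).} Use Assumption (the analogue of \eqref{exogenous_H_to_M}), namely $M_i(h)\indep H_i\mid \boldsymbol{X}^F$, together with the institutional fact that $M$ is a deterministic function of $H$ and the municipal ordinance. Then the observed conditional mean slope $\partial \E[M\mid H=h,\boldsymbol{X}^F,T=t]/\partial h$ equals $\E[\partial M_i(h)/\partial h\mid \boldsymbol{X}^F]$, and this holds with no dependence on $t$ by Step 1.

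\textbf{Step 3 (second stage, $\theta_{1t}$).} Use the analogue of \eqref{H_condition_M_independence}, conditional on $\boldsymbol{X}^F$ and $T=t$. Conditioning on $H=h$ blocks the back-door path $M\leftarrow H\leftarrow U\rightarrow Y$, so the observed slope of $\E[Y\mid T=t,H=h,M=m,\boldsymbol{X}^F]$ in $m$ equals $\E[\partial Y_i(t,h,m)/\partial m\mid \boldsymbol{X}^F]$. Common support guarantees that $m$ varies within cells of $(h,\boldsymbol{X}^F)$, so this derivative is defined on the relevant region. This is the step I expect to be the main obstacle. Because $M$ is a deterministic function of $H$ and $\boldsymbol{X}^F$, holding $H$ and $\boldsymbol{X}^F$ fixed leaves no variation in $M$. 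Identification of $\theta_{1t}$ must therefore come from linearity: the additive structure $Y=\alpha_t+\beta_t h+\theta_{1t} m+\boldsymbol{X}^{F\prime}\lambda_t$ is identified off cross-municipality variation in slots per page, which changes how $M$ moves with $H$. I would spell out exactly this, with common support interpreted as the nonlinearity of the page rule (the step function of $H$ that differs across ordinances). That nonlinearity makes $M$ not collinear with $(H,\boldsymbol{X}^F)$ in the linear model.

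\textbf{Step 4 (combine).} Use the chain rule on $Y_i(t,h,M_i(h))$, dropping any direct path as permitted by \textcite{bellemare2024paper}, so that the mediated component is $\frac{\partial Y}{\partial m}\frac{\partial M}{\partial h}$. Under constant-coefficient linearity, its expectation factors as $\theta_{1t}\theta_{0t}=\gamma^{FD}_t$. Subtracting the $t=0$ stratum from the $t=1$ stratum gives $\delta^{PE\times FD}=\AMTME$. Since $\theta_{0t}$ does not depend on $t$ by Step 1, the difference reduces to $\theta_0(\theta_{11}-\theta_{10})$, which is a useful check. For the three-category treatment, the same argument applies pairwise (front versus inner, back versus inner).
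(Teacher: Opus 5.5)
Your Steps 1, 2 and 4, together with the first half of Step 3, are exactly the paper's proof. That first half is the chain from Assumption 2, then Assumption 1 and consistency, to $\E[Y_i\mid T_i=t,H_i=h,M_i=m,\boldsymbol{X}_i^F=\boldsymbol{x}]=\mu_t(h,m,\boldsymbol{x})$, which you then differentiate in $m$. The genuine problem is the route you then commit to for Step 3. You import the application's fact that $M$ is a deterministic function of $H$ and the ordinance. That contradicts the common-support hypothesis you invoked a sentence earlier, and your proposed repair is incompatible with the linearity you use everywhere else.

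Suppose $M_i(h)=g(h,\boldsymbol{X}_i^F)$ and, as you stipulate, $\nu(h,\boldsymbol{x})=a_M(\boldsymbol{x})+\theta_0 h$ with a constant slope. Then $M_i=a_M(\boldsymbol{X}_i^F)+\theta_0 H_i$ exactly, so $M$ is collinear with $H$ and the municipality dummies. The observable surface $a_t(\boldsymbol{x})+\theta_{1t}a_M(\boldsymbol{x})+(\zeta_t+\theta_0\theta_{1t})h$ then identifies only $\zeta_t+\theta_0\theta_{1t}$, never $\theta_{1t}$. The step-function nonlinearity that would break this collinearity is exactly what linearity of $\nu$ excludes. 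If you admit it, $\nu$ is no longer linear in $h$; for a literal step function, $\partial M_i(h)/\partial h$ is zero almost everywhere. The first-stage slope in your Step 2 then becomes a linear-projection coefficient rather than the $\partial\nu/\partial h$ that enters the $\AMTME$, and Step 4 fails. Municipality-specific slopes $\theta_0(\boldsymbol{x})$ would identify a common $\theta_{1t}$, but they contradict the constant-coefficient linearity you adopt to make the factorization go through.

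The repair is to keep the application out of the proof. The proposition assumes common support, meaning $M$ varies given $(H,\boldsymbol{X}^F)$. Under it, the identified surface $\mu_t(h,\cdot,\boldsymbol{x})$ is defined on a range of $m$, and differentiating it gives $\theta_{1t}$ directly, with no appeal to the page rule. Your observation is a fair criticism of how the proposition maps onto the Tokyo data, where that support condition fails and $\hat\theta_{1t}$ rests on functional form. It is not, however, a step in proving the proposition.

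Relatedly, your ``product of averages versus average of products'' concern is self-inflicted. The $\AMTME$ in \eqref{AMTME_estimand} is built from the conditional-mean surfaces $\nu$ and $\mu_t$, not from unit-level derivatives. Under the paper's linearity both factors are constants, so no unit-level slope homogeneity or uncorrelatedness is needed. Nor do you need to drop the direct path: $\zeta_t$ stays in $\mu_t$ and simply never enters $\theta_0(\theta_{11}-\theta_{10})$.
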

\begin{Assumption}
    \begin{equation*}
        \left(\{Y_i (t, h, m)\}_{t \in \mathscr{T}, h \in \mathscr{H}, m \in \mathscr{M}}, H_i, \{M_i (h)\}_{h \in \mathscr{H}}, \boldsymbol{X}_i^F\right) \indep T_i,
    \end{equation*}
    where $\mathscr{T}, \mathscr{H}, \mathscr{M}$ are the supports of $T$, $H$, and $M$.
\end{Assumption}
\begin{Assumption}
    \begin{equation*}
        \{Y_i (t, h, m)\}_{t \in \mathscr{T}, h \in \mathscr{H}, m \in \mathscr{M}} \indep M_i \mid H_i, \boldsymbol{X}_i^F.
    \end{equation*}
\end{Assumption}
\begin{Assumption}
    \begin{equation*}
        \{M_i (h)\}_{h \in \mathscr{H}} \indep H_i \mid \boldsymbol{X}_i^F.
    \end{equation*}
\end{Assumption}
\begin{Estimand}
    \begin{equation}
        \AMTME \equiv \E_{H, \boldsymbol{X}^F} \left[\frac{\partial \nu (h, \boldsymbol{x})}{\partial h} \left\{\frac{\partial \mu_1 (h, m, \boldsymbol{x})}{\partial m} - \frac{\partial \mu_0 (h, m, \boldsymbol{x})}{\partial m}\right\}\right], \vspace{-18pt} \label{AMTME_estimand}
    \end{equation}
    \begin{alignat*}{1}
        \text{where}\ &\nu (h, \boldsymbol{x}) = \E \left[M_i (h) \mid \boldsymbol{X}_i^F = \boldsymbol{x}\right],\\
        &\mu_t (h, m, \boldsymbol{x}) = \E\left[Y_i (t, h, m) \mid H_i = h, \boldsymbol{X}_i^F = \boldsymbol{x}\right].
    \end{alignat*}
\end{Estimand}

\section{The Change in Front / Back Page Effect from the Number of Candidates}
    
    \subsection{Estimation}
        Section 4 shows that the estimand of interest in this paper can be identified by applying the PEF and the FDC. On this basis, we assume a linear data generating process (DGP) and estimate the following models:
\begin{alignat}{1}
    \# \pages_{wy} &= \theta_{0t} \# \candidates_{wy} + \mu_w + v_{wy}, \label{FD_first}\\
    \win_{iwy} &= \theta_{1t} \# \pages_{wy} + \zeta_t \# \candidates_{wy} + \eta_w + \boldsymbol{x \beta}_t + \varepsilon_{iwy}, \label{FD_second}\\
    \gamma_t^{FD} &= \theta_{0t} \times \theta_{1t}, \label{FD_multiply}\\
    \delta^{PE \times FD} &= \gamma_1^{FD} - \gamma_0^{FD}. \label{PE_diff}
\end{alignat}
Here, $i$ indexes candidates, $w$ municipalities, and $y$ election years. $\# \pages$ denotes the total number of CIP pages, $\# \candidates$ the number of candidates, $win$ a dummy equal to $1$ if the candidate wins, and $\mu_w$ and $\eta_w$ municipality fixed effects. $\boldsymbol{x}$ is a vector of candidate characteristics (a male dummy, age, an incumbent dummy, and multiple dummies for party affiliation), which is included to improve estimation efficiency. In addition, because \eqref{FD_first}, \eqref{FD_second}, and \eqref{FD_multiply} are estimated after splitting the sample into candidates listed on the front / back / inner page(s), each coefficient carries a $t$ subscript indicating the corresponding category.\footnote{Under Assumption 1, the population coefficients $\theta_{00}$ and $\theta_{01}$ coincide. However, in a finite sample, the corresponding estimates need not be identical. The subscript $t$ is therefore retained to make explicit that the coefficients estimated separately across strata may differ.} $t=1$ denotes the front / back page and $t=0$ denotes the inner pages. We first split the sample into the front, back, and inner strata and estimate \eqref{FD_first} and \eqref{FD_second} by ordinary least squares (OLS) within each stratum. We then compute $\hat{\gamma}_t^{FD}$ as the product of the coefficient estimates $\hat{\theta}_{0t}^{OLS}$ and $\hat{\theta}_{1t}^{OLS}$, following \eqref{FD_multiply}. Finally, following \eqref{PE_diff}, we calculate the difference $\hat{\delta}^{PE \times FD}$ between the front-page (back-page) estimate $\hat{\gamma}_1^{FD}$ and the inner-page estimate $\hat{\gamma}_0^{FD}$ to identify the extent to which the number of candidates changes the front / back page effect.\par
As shown in Section 3.1, candidates' positions in the CIP are determined by lottery. Therefore, the setting in this paper satisfies Assumption 1 required for applying the PEF. As also shown in Section 3.1, the total number of CIP pages is mechanically determined by the number of candidates according to each municipality's ordinance. Moreover, the dataset is constructed so that each municipality's sample period contains no change in the ordinance specifying the pamphlet format. Thus, conditional on the number of candidates and municipality fixed effects, the total number of CIP pages is not affected by unobserved factors such as electoral salience and is exogenous. The setting therefore satisfies Assumptions 2 and 3 required for the FDC.

    \subsection{Results}
        Tables 3--6 report the estimation results. The estimates show that one additional candidate increases the winning probability by $0.148$ percentage points among candidates listed on the front page ($p<0.1$), decreases it by $0.021$ percentage points among candidates listed on the back page ($p>0.1$), and decreases it by $0.174$ percentage points among candidates listed on inner pages ($p<0.05$). The difference between the front-page and inner-page estimates is $0.322$ percentage points ($p<0.05$), while the difference between the back-page and inner-page estimates is $0.153$ percentage points ($p>0.1$).
\begin{table}[H]
    \centering
    \begin{threeparttable}
        \caption{First-stage estimation results based on the FDC}
        \begin{tabular}{@{}lccc@{}}
            \toprule
            & \multicolumn{3}{c}{Total number of pages}\\
            \cmidrule{2-4}
            & Front & Back & Inner\\
            \midrule
            Number of candidates & \hspace{15pt} 0.100*** & \hspace{15pt} 0.096*** & \hspace{15pt} 0.131***\\
            & (0.027) & (0.022) & (0.035)\\
            \midrule
            Municipality FE & Y & Y & Y\\
            Sample size & 1,858 & 1,334 & 5,257\\
            \bottomrule
        \end{tabular}
        \begin{tablenotes}[flushleft]
            \footnotesize
            \setlength{\labelsep}{0pt}
            \item \textbf{Note}: ***, **, and * indicate statistical significance at the 1\percent, 5\percent, and 10\percent{} levels, respectively. Standard errors clustered at the municipality level are reported in parentheses.
        \end{tablenotes}
    \end{threeparttable}
\end{table}
\par
\begin{table}[H]
    \centering
    \begin{threeparttable}
        \caption{Second-stage estimation results based on the FDC}
        \begin{tabular}{@{}lccc@{}}
            \toprule
            & \multicolumn{3}{c}{Winning dummy}\\
            \cmidrule{2-4}
            & Front & Back & Inner\\
            \midrule
            Total number of pages & \hspace{15pt} 1.472*** & -0.222 \hspace{2pt}& \hspace{11pt} -1.333***\\
            & (0.460) & (0.841) & (0.358)\\
            \midrule
            Number of candidates & Y & Y & Y\\
            Municipality FE & Y & Y & Y\\
            Covariates & Y & Y & Y\\
            Sample size & 1,858 & 1,334 & 5,257\\
            \bottomrule
        \end{tabular}
        \begin{tablenotes}[flushleft]
            \footnotesize
            \setlength{\labelsep}{0pt}
            \item \textbf{Note}: Coefficient estimates and standard errors are reported after multiplying by 100. ***, **, and * indicate statistical significance at the 1\percent, 5\percent, and 10\percent{} levels, respectively. Standard errors clustered at the municipality level are reported in parentheses.
        \end{tablenotes}
    \end{threeparttable}
\end{table}
\par
\begin{table}[H]
    \centering
    \begin{threeparttable}
        \caption{Product estimation results based on the FDC}
        \begin{tabular}{@{}lccc@{}}
            \toprule
            & \multicolumn{3}{c}{Winning dummy}\\
            \cmidrule{2-4}
            & Front & Back & Inner\\
            \midrule
            Number of candidates \hspace{3pt} & \hspace{5pt} 0.148* & -0.021 \hspace{2pt}& \hspace{6pt} -0.174**\\
            & (0.080) & (0.080) & (0.086)\\
            \bottomrule
        \end{tabular}
        \begin{tablenotes}[flushleft]
            \footnotesize
            \setlength{\labelsep}{0pt}
            \item \textbf{Note}: Coefficient estimates and standard errors are reported after multiplying by 100. ***, **, and * indicate statistical significance at the 1\percent, 5\percent, and 10\percent{} levels, respectively. Standard errors reported in parentheses are computed using the cluster bootstrap method with clustering at the municipality level.
        \end{tablenotes}
    \end{threeparttable}
\end{table}
\par
\begin{table}[H]
    \centering
    \begin{threeparttable}
        \caption{Difference estimation results based on the PEF}
        \begin{tabular}{@{}lcc@{}}
            \toprule
            & Front page effect & Back page effect\\
            \midrule
            Number of candidates & \hspace{10pt} 0.322** & 0.153\\
            & (0.138) & (0.147)\\
            \bottomrule
        \end{tabular}
        \begin{tablenotes}[flushleft]
            \footnotesize
            \setlength{\labelsep}{0pt}
            \item \textbf{Note}: Coefficient estimates and standard errors are reported after multiplying by 100. ***, **, and * indicate statistical significance at the 1\percent, 5\percent, and 10\percent{} levels, respectively. Standard errors reported in parentheses are computed using the cluster bootstrap method with clustering at the municipality level.
        \end{tablenotes}
    \end{threeparttable}
\end{table}
\par
Because the baseline front page effect in the sample is $4.023$ percentage points,\footnote{This value is obtained by estimating by OLS estimation of a linear regression model with the winning dummy as the outcome and only the front-page and back-page dummies as explanatory variables.} the increase of $0.322$ percentage points is substantial. We therefore conclude that the number of candidates increases the front page effect, whereas the results do not support that it increases the back page effect. These results indicate that an increase in the number of candidates changes voting decision making even when time constraints are relatively weak, but that this change is limited to greater concentration on information encountered early in the process.

    \subsection{Limitations}
        First, the analysis uses election data in which the number of candidates ranges from 25 to 82 and estimates the effect of the number of candidates on the front / back page effect under a linearity assumption. However, some studies find inverted U-shaped relationships between the number of options and choice probabilities or satisfaction, peaking at three options \parencite{taagepera2014turnout} or ten options \parencite{shah2007buying}. Therefore, the results of this paper may fail to capture nonlinear relationships between the number of candidates and the front / back page effect, particularly differences in the relationship when the number of candidates is in the single digits.\par
Second, because the analysis uses the FDC, it identifies only the part of the effect of the number of candidates on the front / back page effect that operates through the total number of CIP pages. Any effects operating through other paths cannot be captured by the present analysis. Therefore, we cannot strongly rule out the possibility that, if effects through other paths are large enough to overturn the present estimates, the conclusion regarding the overall effect of the number of candidates on the voting decision-making process would change.\par
Third, the dataset constructed for this analysis contains 165 of the 487 municipal assembly elections held in Tokyo from 1987 to 2023. This reflects data-collection constraints, such as the absence of preserved election records for earlier years, as well as identification constraints, such as excluding periods containing changes to ordinances specifying the pamphlet format. If tendencies for election records to be preserved or for pamphlet formats to be revised are correlated with the number of candidates or the front / back page effect, the results may not represent the overall pattern for the target period and area.\par
Finally, because of data limitations, this paper cannot examine heterogeneity in the effect of the number of candidates on the front / back page effect. The analysis also relies on the CIP, a relatively uncommon medium.\footnote{\textcite{haruhara2026unintended} identify the Voters' Pamphlet in the United States, the Mayoral Election Address Booklet in the United Kingdom, and the profession de foi in France as examples of similar official media providing candidate information.} The results should therefore be extrapolated with caution to elections in other regions or periods, or to elections in which the primary medium for providing candidate information differs substantially.

\section{Conclusions and Policy Implications}
    Using data from municipal assembly elections held from 1987 to 2023, this paper applies the FDC to analyze how the number of candidates affects the magnitude of the front / back page effect. The results show that one additional candidate increases the front page effect by $0.322$ percentage points through changes in the total number of CIP pages. This increase amounts to $8.0$\percent{} of the baseline front page effect.\par
The results show that an increase in the number of candidates changes ordinary voter behavior---information acquisition during the campaign period and the subsequent choice of whom to vote for. In particular, the increase in the front page effect indicates that the voting decision-making process becomes simpler on average. Moreover, because the FDC identifies a specific mediator, the results make clear that this phenomenon arises from increases in the costs of information acquisition and comparison. Thus, within the elections in this paper's sample, an increase in the number of candidates increases the divergence between voters' actual choices and their accurate preferences.\par
This conclusion implies that there may be benefits to policies that directly reduce the number of candidates by increasing the cost of candidacy, as well as to policies that reduce the number of candidates each voter must compare by dividing electoral districts. It therefore suggests that electoral institutions should be adjusted toward an optimal number of candidates by weighing these benefits against losses from a narrower range of candidate attributes and risks such as gerrymandering. Alternatively, if voters' preferences can be predicted from their characteristics and past choices and information can be presented in an order consistent with those preferences, the divergence between actual choices and accurate preferences may be reduced.\par
The conclusions may also apply to choices among goods that resemble candidates in elections. For example, candidates share features of experience and credence goods with professional services such as those provided by lawyers, accountants, and consultants. Because consumers have difficulty predicting ex post outcomes from their prior information for experience and credence goods, an increase in the number of options may be especially likely to induce them to abandon information acquisition or evaluation. Therefore, for experience or credence goods for which the number of available options can be regarded as excessive, institutional arrangements may be needed either to reduce the number of options faced by each consumer or to enable information to be presented in a manner aligned with individual consumer preferences.

\clearpage
\printbibliography

\clearpage
\appendix
\section*{Appendix: Proof of Proposition 1}
    \setcounter{Proposition}{0}
\setcounter{Assumption}{0}
\subsection*{A.1. Setup and the estimand}

    Let $T_i \in \{0, 1\}$ denote the randomized treatment, let $H_i \in \mathscr{H} \subseteq \mathbb{R}$ denote the moderator, and let $M_i \in \mathscr{M} \subseteq \mathbb{R}$ denote the mediator. Let $M_i(h)$ be the potential mediator under $H_i = h$, and let $Y_i(t, h, m)$ be the potential outcome under $T_i = t, H_i = h$, and $M_i = m$. Let $\boldsymbol{X}_i^F$ denote the covariates used for the front-door adjustment.\par
    Although $H$ and $M$ are count-valued in the empirical application, the estimation treats them as continuous variables.  Accordingly, the estimand below is defined on the continuously extended linear causal-response surfaces imposed in Section 4.  Under linearity, the derivative with respect to $H$ is also the effect associated with a one-unit increase in $H$.\par
    Define the mean causal response of the mediator by
    \begin{equation}
        \nu(h, \boldsymbol{x}) \equiv \E \left[M_i(h) \mid \boldsymbol{X}_i^F = \boldsymbol{x} \right], \label{eq:nu}
    \end{equation}
    and, for $t \in \{0, 1\}$, define the mean causal response of the outcome to the mediator, within the $H_i = h$ stratum, by
    \begin{equation}
        \mu_t(h, m, \boldsymbol{x}) \equiv \E \left[Y_i(t, h, m) \mid H_i = h, \boldsymbol{X}_i^F = \boldsymbol{x}\right]. \label{eq:mu}
    \end{equation}
    The conditional causal effect of $T$ at $(h, m, \boldsymbol{x})$ is
    \begin{equation*}
        \tau(h, m, \boldsymbol{x}) \equiv \mu_1(h, m, \boldsymbol{x}) - \mu_0(h, m, \boldsymbol{x}).
    \end{equation*}\par
    The total marginal change in this treatment effect as $H$ changes along the mediator response $m = \nu(h, \boldsymbol{x})$ can be decomposed by the chain rule:
    \begin{alignat}{1}
        \frac{d}{dh} \tau \left(h, \nu(h, \boldsymbol{x}), \boldsymbol{x}\right) &= \underbrace{\frac{\partial \tau(h, m, \boldsymbol{x})}{\partial h} \bigg|_{m = \nu(h, \boldsymbol{x})}}_{\text{moderation through paths not mediated by }M} \notag\\
        &\quad + \underbrace{\frac{\partial \tau(h, m, \boldsymbol{x})}{\partial m} \bigg|_{m = \nu(h, \boldsymbol{x})}\frac{\partial \nu(h, \boldsymbol{x})}{\partial h}}_{\text{moderation through }H \rightarrow M \rightarrow Y}. \label{eq:chain}
    \end{alignat}\par
    The parameter of interest is the average of the second component only.
    \begin{equation}
        \boxed{
            \begin{aligned}
                \AMTME & \equiv \E_{H, \boldsymbol{X}^F} \left[ \frac{\partial \nu(h, \boldsymbol{x})}{\partial h}\left\{\frac{\partial \mu_1 (h, m, \boldsymbol{x})}{\partial m} - \frac{\partial \mu_0 (h, m, \boldsymbol{x})}{\partial m}\right\}\right]\bigg|_{\substack{h = H_i\\m = \nu(H_i, \boldsymbol{X}_i^F)}}.
            \end{aligned}
        } \label{eq:amtme}
    \end{equation}\par
    Thus, the average mediated treatment moderation effect ($\AMTME$) measures the marginal change in the causal effect of $T$ induced by a marginal change in $H$ \emph{only through the mediator $M$}.  In particular, the direct $H \rightarrow Y$ component,
    \begin{equation*}
        \E_{H, \boldsymbol{X}^F}\left[\frac{\partial \mu_1 (h, m, \boldsymbol{x})}{\partial h} - \frac{\partial \mu_0 (h, m, \boldsymbol{x})}{\partial h}\right],
    \end{equation*}
    is not part of $\AMTME$.

\subsection*{A.2. Assumptions}
    
    For formal completeness, Assumption 1 is stated with the full collection of potential mediators included.  This is the potential-outcome counterpart of complete randomization of $T$.
    \begin{Assumption}[Randomization of $T$]
        \label{ass:T}
        \[\left(\{Y_i(t, h, m)\}_{t \in \mathscr{T}, h \in \mathscr{H}, m \in \mathscr{M}}, \{M_i(h)\}_{h \in \mathscr{H}}, H_i, \boldsymbol{X}_i^F\right) \indep T_i.\]
    \end{Assumption}
    \begin{Assumption}[Conditional exogeneity of $M$ for the outcome]
        \label{ass:M_Y}
        \[\{Y_i(t, h, m)\}_{t \in \mathscr{T}, h \in \mathscr{H}, m \in \mathscr{M}} \indep M_i \mid H_i, \boldsymbol{X}_i^F.\]
    \end{Assumption}
    \begin{Assumption}[Conditional exogeneity of $H$ for the mediator]
        \label{ass:H_M}
        \[\{M_i(h)\}_{h \in \mathscr{H}} \indep H_i \mid \boldsymbol{X}_i^F.\]
    \end{Assumption}\par
    In addition, impose SUTVA / consistency, the relevant common-support conditions, and linearity. For the present proof, linearity means that the conditional mean causal-response surfaces can be written as
    \begin{align}
        \nu(h,\boldsymbol{x}) &= a_M(\boldsymbol{x}) + \theta_0 h, \label{eq:linearM}\\
        \mu_t(h, m, \boldsymbol{x}) &= a_t(\boldsymbol{x}) + \zeta_t h + \theta_{1t} m, \qquad t \in \{0, 1\}, \label{eq:linearY}
    \end{align}
    with differentiation and expectation interchangeable. Importantly, $\zeta_t$ is not restricted to zero; hence a direct $H \rightarrow Y$ path is allowed.

\subsection*{A.3. Proposition and proof}

    For notational completeness, write the two arm-specific causal marginal effects used in the parallel estimation framework as
    \begin{alignat}{1}
        \theta_{0t} &\equiv \E_{H, \boldsymbol{X}^F \mid T = t}\left[\E \left[\frac{\partial M_i(h)}{\partial h} \middle| T_i = t, H_i = h, \boldsymbol{X}_i^F \right]\right], \label{eq:theta0t}\\
        \theta_{1t} &\equiv \E_{H, M, \boldsymbol{X}^F \mid T = t} \left[\E \left[\frac{\partial Y_i(t, h, m)}{\partial m} \middle| T_i = t, H_i = h, M_i = m, \boldsymbol{X}_i^F\right]\right], \label{eq:theta1t}\\
        \gamma_t^{FD} &\equiv \theta_{0t} \theta_{1t},
        \qquad
        \delta^{PE\times FD} \equiv \gamma_1^{FD}-\gamma_0^{FD}. \label{eq:delta}
    \end{alignat}
    The outer expectation in \eqref{eq:theta1t} includes $M$ so that no free $m$ remains. Under the linearity assumption, this additional averaging is numerically immaterial because the marginal effect with respect to $m$ is constant.
    \begin{Proposition}[Identification of the $\AMTME$]
        \label{prop:1}
        Under SUTVA/consistency, common support, complete randomization of $T$, linearity, and Assumptions \ref{ass:T}--\ref{ass:H_M}, 
        \begin{equation*}
            \delta^{PE\times FD} = \AMTME.
        \end{equation*}
        Equivalently,
        \begin{equation*}
            \delta^{PE \times FD} = \theta_0 \left(\theta_{11} - \theta_{10} \right),
        \end{equation*}
        where $\theta_0$ is the causal marginal effect of $H$ on $M$, and $\theta_{1t}$ is the causal marginal effect of $M$ on $Y$ under $T=t$.
    \end{Proposition}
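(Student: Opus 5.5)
The proof has three steps. First, under linearity the estimand collapses to a constant. Second, each arm-specific quantity $\theta_{0t}$ and $\theta_{1t}$ in \eqref{eq:theta0t}--\eqref{eq:theta1t} is shown to equal the corresponding structural coefficient. Third, the two are matched.

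\textbf{Step 1: reduce the estimand.} Substituting \eqref{eq:linearM} and \eqref{eq:linearY} into \eqref{eq:amtme} gives $\partial\nu/\partial h=\theta_0$ and $\partial\mu_t/\partial m=\theta_{1t}$, both free of $(h,m,\boldsymbol{x})$. The outer expectation is therefore trivial, and
\[
\AMTME=\theta_0(\theta_{11}-\theta_{10}).
\]
This already establishes the equivalence of the two displayed forms in Proposition~\ref{prop:1}. What remains is $\delta^{PE\times FD}=\theta_0(\theta_{11}-\theta_{10})$, where the arm-specific quantities are the ones defined in \eqref{eq:theta0t}--\eqref{eq:theta1t}.

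\textbf{Step 2: the mediator stage.} Consider $\E[\partial M_i(h)/\partial h\mid T_i=t,H_i=h,\boldsymbol{X}_i^F]$.
\begin{itemize}
\item Assumption~\ref{ass:T} makes $(\{M_i(h)\},H_i,\boldsymbol{X}_i^F)$ jointly independent of $T_i$, so conditioning on $T_i=t$ can be dropped.
\item Assumption~\ref{ass:H_M} then lets us drop the conditioning on $H_i=h$.
\item Interchanging differentiation and expectation gives $\partial\nu(h,\boldsymbol{x})/\partial h=\theta_0$.
\end{itemize}
Averaging over $(H,\boldsymbol{X}^F)\mid T=t$ leaves $\theta_{0t}=\theta_0$ for both arms. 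This is the formal version of the footnote's remark that $\theta_{00}=\theta_{01}$ in the population.

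\textbf{Step 3: the outcome stage.} Consider $\E[\partial Y_i(t,h,m)/\partial m\mid T_i=t,H_i=h,M_i=m,\boldsymbol{X}_i^F]$.
\begin{itemize}
\item Assumption~\ref{ass:M_Y} removes the conditioning on $M_i=m$.
\item Assumption~\ref{ass:T} removes the conditioning on $T_i=t$. Because $T$ is independent of the whole vector $(\{Y_i(\cdot)\},H_i,\boldsymbol{X}_i^F)$, it remains independent of $\{Y_i(\cdot)\}$ conditional on $(H_i,\boldsymbol{X}_i^F)$, which justifies this step.
\item What remains is $\partial_m\E[Y_i(t,h,m)\mid H_i=h,\boldsymbol{X}_i^F=\boldsymbol{x}]=\partial\mu_t/\partial m=\theta_{1t}$, again using the interchange of differentiation and expectation.
\end{itemize}
The outer average over $(H,M,\boldsymbol{X}^F)\mid T=t$ is immaterial because the integrand is constant. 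Hence $\gamma_t^{FD}=\theta_0\theta_{1t}$, and
\[
\delta^{PE\times FD}=\gamma_1^{FD}-\gamma_0^{FD}=\theta_0(\theta_{11}-\theta_{10})=\AMTME.
\]

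\textbf{Main obstacle.} The delicate point is the order in which conditioning events are removed in Step 3. Assumption~\ref{ass:M_Y} is stated without conditioning on $T$, so we need $\{Y_i(\cdot)\}\indep M_i\mid H_i,\boldsymbol{X}_i^F,T_i$, or else we must first strip $T$ and then $M$.

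I would handle this by observing that $M_i=M_i(H_i)$ under consistency. By Assumption~\ref{ass:T}, the joint law of $(\{Y_i(\cdot)\},\{M_i(h)\},H_i,\boldsymbol{X}_i^F)$ does not depend on $T_i$. Conditioning on $T_i=t$ therefore leaves the conditional law of $\{Y_i(\cdot)\}$ given $(H_i,M_i,\boldsymbol{X}_i^F)$ unchanged, and Assumption~\ref{ass:M_Y} can be applied within each arm.

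Common support is used only to ensure that these conditional expectations are well defined at the relevant $(h,m,\boldsymbol{x})$. SUTVA/consistency ensures that observed $M_i$ and $Y_i$ equal their potential counterparts at the realized arguments.
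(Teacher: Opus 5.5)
Your proposal is correct and follows the paper's proof essentially step for step: under linearity the $\AMTME$ collapses to $\theta_0(\theta_{11}-\theta_{10})$, Assumptions 1 and 3 give $\theta_{0t}=\theta_0$ in both arms, and Assumptions 1 and 2 (with consistency) give the arm-specific outcome slope $\theta_{1t}$; the only difference is that the paper reduces the estimand last rather than first. Your explicit justification for applying Assumption 2 within a $T$ arm, via $M_i=M_i(H_i)$ and the joint independence in Assumption 1, tightens the paper's one-line remark that $T_i=t$ may simply be added to the conditioning set.
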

    
    \begin{proof}
        We proceed in four steps.
        
        \paragraph{Step 1: Identification of the causal marginal effect of $H$ on $M$.}

            By consistency, for units with $H_i = h, M_i = M_i(h)$. Therefore,
            \begin{alignat}{1}
                \E \left[M_i \mid H_i = h, \boldsymbol{X}_i^F = \boldsymbol{x}\right] &= \E \left[M_i(h) \mid H_i = h, \boldsymbol{X}_i^F = \boldsymbol{x}\right] \notag\\
                &= \E \left[M_i(h) \mid \boldsymbol{X}_i^F = \boldsymbol{x}\right], \label{eq:first-id}
            \end{alignat}
            where the second equality follows from Assumption \ref{ass:H_M}. Hence
            \begin{equation*}
                \E[M_i \mid H_i = h, \boldsymbol{X}_i^F = \boldsymbol{x}] = \nu(h, \boldsymbol{x}).
            \end{equation*}
            Differentiating with respect to $h$ and applying \eqref{eq:linearM},
            \begin{equation*}
                \frac{\partial}{\partial h}\E \left[M_i \mid H_i = h, \boldsymbol{X}_i^F = \boldsymbol{x}\right] = \frac{\partial \nu(h, \boldsymbol{x})}{\partial h} = \theta_0.
            \end{equation*}\par
            Assumption \ref{ass:T} implies that conditioning on $T_i = t$ does not change this causal response. Consequently, for both $t = 0$ and $t = 1$,
            \begin{equation}
                \theta_{0t} = \theta_0. \label{eq:theta0equal}
            \end{equation}

        \paragraph{Step 2: Identification of the causal marginal effect of $M$ on $Y$ within each $T$ arm.}

            By Assumption \ref{ass:M_Y},
            \begin{alignat}{1}
                &\E\left[Y_i(t, h, m) \mid H_i = h, M_i = m, \boldsymbol{X}_i^F = \boldsymbol{x}\right] \notag\\
                &\qquad = \E\left[Y_i(t, h, m) \mid H_i = h, \boldsymbol{X}_i^F = \boldsymbol{x}\right] = \mu_t(h, m, \boldsymbol{x}). \label{eq:second-exog}
            \end{alignat}
            Assumption \ref{ass:T} allows $T_i = t$ to be added to the conditioning set without changing the potential-outcome mean. Consistency then gives
            \begin{equation}
                \E\left[Y_i \mid T_i = t, H_i = h, M_i = m, \boldsymbol{X}_i^F = \boldsymbol{x}\right] = \mu_t(h, m, \boldsymbol{x}). \label{eq:second-id}
            \end{equation}
            Differentiating \eqref{eq:second-id} with respect to $m$ and applying \eqref{eq:linearY},
            \begin{equation}
                \frac{\partial}{\partial m}\E\left[Y_i \mid T_i = t, H_i = h, M_i = m, \boldsymbol{X}_i^F = \boldsymbol{x}\right] = \frac{\partial \mu_t(h, m, \boldsymbol{x})}{\partial m} = \theta_{1t}. \label{eq:second-slope}
            \end{equation}
            Thus the second component of the front-door product in the $T = t$ arm identifies the causal marginal effect of $M$ on $Y$ under $T = t$.

        \paragraph{Step 3: Identification of the mediated effect within each $T$ arm.}

            From \eqref{eq:linearM} and \eqref{eq:linearY},
            \begin{equation*}
                \frac{\partial \nu(h, \boldsymbol{x})}{\partial h} = \theta_0,
                \qquad
                \frac{\partial \mu_t (h, m, \boldsymbol{x})}{\partial m} = \theta_{1t}.
            \end{equation*}
            Hence the mediated marginal effect of $H$ on $Y$ within arm $t$ is
            \begin{alignat}{1}
                \gamma_t^{M} &\equiv \E_{H, \boldsymbol{X}^F}\left[\frac{\partial \nu(h, \boldsymbol{x})}{\partial h}\frac{\partial \mu_t (h, m, \boldsymbol{x})}{\partial m}\right] \notag\\
                &= \theta_0 \theta_{1t}. \label{eq:arm-mediated}
            \end{alignat}
            By \eqref{eq:theta0equal}, this is exactly
            \begin{equation*}
                \gamma_t^{M} = \theta_{0t} \theta_{1t} = \gamma_t^{FD}.
            \end{equation*}

        \paragraph{Step 4: Difference across the two randomized $T$ arms.}

            Taking the difference between $t=1$ and $t=0$,
            \begin{alignat}{1}
                \delta^{PE \times FD} &= \gamma_1^{FD} - \gamma_0^{FD} \notag\\
                &= \theta_0 \theta_{11} - \theta_0 \theta_{10} \notag\\
                &= \theta_0 \left(\theta_{11} - \theta_{10}\right). \label{eq:product-diff}
            \end{alignat}
            On the other hand, substituting \eqref{eq:linearM} and \eqref{eq:linearY} into the definition of the AMTME in \eqref{eq:amtme} yields
            \begin{alignat}{1}
                \AMTME &= \E_{H, \boldsymbol{X}^F}\left[\theta_0 \left(\theta_{11} - \theta_{10}\right)\right] \notag\\
                &= \theta_0 \left(\theta_{11} - \theta_{10}\right).
            \end{alignat}
            Therefore,
            \begin{equation*}
                \boxed{
                    \delta^{PE \times FD} = \AMTME
                }.
            \end{equation*}\par
            Finally, note that $\zeta_t$, which represents the direct $H \rightarrow Y$ component in \eqref{eq:linearY}, does not enter any step of \eqref{eq:product-diff}. Therefore no restriction of the form $Y_i(t,h,m)=Y_i(t,h',m)$ is required for identification of the $\AMTME$. Such a restriction would be required only if the front-door product were interpreted as the \emph{total} effect of $H$ on $Y$.
    \end{proof}
\end{document}